\documentclass[reqno]{amsart}
\usepackage{amsmath,amssymb}
\usepackage{mathtools}
\usepackage{graphics,epsfig,xcolor}
\usepackage[mathscr]{euscript}
\newtheorem{theorem}{Theorem}[section]
\newtheorem{proposition}[theorem]{Proposition}
\newtheorem{lemma}[theorem]{Lemma}

\theoremstyle{definition}
\newtheorem{definition}[theorem]{Definition}

\newtheorem{remark}[theorem]{Remark}
\numberwithin{equation}{section}
\numberwithin{theorem}{section}
\renewcommand{\epsilon}{\varepsilon}

\newcommand{\mc}[1]{{\mathcal #1}}
\newcommand{\bb}[1]{{\mathbb #1}}
\newcommand{\ms}[1]{{\mathscr #1}}
\newcommand{\bs}[1]{{\boldsymbol #1}}

\newcommand{\Ent}{\mathop{\rm Ent}\nolimits}
\newcommand{\E}{\mathop{\rm E}\nolimits}
\newcommand{\He}{\mathop{\rm H_e}\nolimits}

\newcommand{\de}{\mathop{}\!\mathrm{d}}
\newcommand{\carlo}{\color{black}}

\title{Variational derivation of the homogeneous Boltzmann equation}

\author[G.\ Basile]{Giada Basile}
\address{Giada Basile \hfill\break \indent
   Dipartimento di Matematica, Universit\`a di Roma `La Sapienza'
   \hfill\break \indent
   P.le Aldo Moro 2, 00185 Roma, Italy}
 \email{basile@mat.uniroma1.it}
 \author[D.\ Benedetto]{Dario Benedetto}
 \address{Dario Benedetto \hfill\break \indent
   Dipartimento di Matematica, Universit\`a di Roma `La Sapienza'
   \hfill\break \indent
   P.le Aldo Moro 2, 00185 Roma, Italy}
 \email{benedetto@mat.uniroma1.it}
 \author[C.\ Orrieri]{Carlo Orrieri}
 \address{Carlo Orrieri \hfill\break \indent
   Dipartimento di Matematica, Universit\`a di Pavia
   \hfill\break \indent
   Via Ferrata 1, 27100 Pavia, Italy}
 \email{carlo.orrieri@unipv.it}

\begin{document}
 \begin{abstract}
   We introduce a variational formulation of the homogeneous
   Boltzmann equation, with hard-sphere cross section, which selects
   the unique energy conserving solution.
   We prove that this solution arises from the microscopic dynamics, namely Kac's walk,
   and we establish the propagation in time of entropic chaoticity,
   under the minimal assumption
   that the initial distribution is entropically chaotic.
 \end{abstract}
 \keywords{Kac's walk, homogeneous Boltzmann equation, entropic chaoticity}
\subjclass[2010]{35Q20 
  82C40 
}

\maketitle
\thispagestyle{empty}

\section{Introduction}
Kinetic equations can be naturally formulated in terms of an entropy dissipation inequality.
For the homogeneous Boltzmann equation (HBE)
this formulation has been firstly established by Erbar \cite{Er2} in terms of a gradient flow with respect to a suitable metric.
One of the advantage of this approach is the possibility to derive the solution to the HBE from the underlying stochastic microscopic
dynamics, the Kac's walk, passing to the limit in the entropy dissipation inequality for the particle system.
Under some moment assumptions on the initial
distribution, this has been proved in \cite{Er2}, together with the propagation of entropic chaoticity, i.e. the convergence of
the microscopic entropy  per particle to the  macroscopic one \cite{CC}.
The propagation of entropic chaoticity is a key point of the Kac's program and it has been firstly established by Mischler,
Mouhot \cite{MM}, under stronger assumptions on the initial distribution, without exploiting the variational structure.

In this work we propose a novel variational characterization of HBE, written
in terms of a measure-flux pair, and related to the large deviation rate function
of the underlying microscopic dynamics.
Variational formulations involving the large deviation rate function
have been already introduced in \cite{ADPZ,MPR}.
This kind of structure can be also recognized
for the Fokker-Planck equation of continuous time Markov chains \cite{Mi,BBB}.
A different variational formulation can be found in 
\cite{Ma, Er}.
Recently in \cite{BB} the non-reversible case has been approached.

We use the variational 
formulation to derive the kinetic limit of the empirical measure of the Kac's walk with hard-sphere cross section
under the minimal assumption of entropic chaoticity.
Under the same minimal condition we prove the
propagation in time of the entropic chaoticity.
We stress that 
the HBE with hard-sphere cross section has a unique energy preserving solution
\cite{MW}, but other solutions with increasing energy can be constructed
as firstly proved by Lu and Wennberg \cite{LuW}. For all these solutions,
the balance of the kinetic entropy  holds \cite{Lu}.
Therefore, a variational formulation related to the entropy  balance
cannot select the ``correct'' solution, unless  the initial datum
has some finite higher moments.
In our approach the uniqueness of the variational solution
is obtained by constraining the energy not to exceed the initial one.
This naturally follows from the microscopic derivation
from the Kac's walk, under the assumption of entropic chaoticity of the initial
datum. Notice that Lu and Wennberg solutions
can be derived from the microscopic dynamics,
as shown in \cite{He}, but the 
initial data are not entropically chaotic.

In our proof
we do not require 
finite higher moments, as usually assumed in the literature, see   e.g. \cite{Sn}.
A key observation is 
that any limit point of the empirical path solves the Boltzmann equation.
The initial entropic chaoticity forces the limit points to have the initial right energy,
i.e. the one prescribed by the microcanonical initial measure.
This fact follows from a result on large deviation on the initial measure (see \cite{BBBC2,Nam}),
together with the relation between the entropy per particle and the large deviation rate function
established in \cite{Mar}.
Finally, the
uniqueness of the limit point follows from 
the uniqueness  of the energy preserving solution.

To derive the macroscopic entropy dissipation inequality from the microscopic one we exploit the Erbar's construction in \cite{Er2}, which includes both the pushforward of the empirical measure and of a measure recording collision details.
We remark that in the Erbar proof a superposition argument is needed.
We avoid it since 
in our large deviation approach, 
the flux, an observable
recording the details of the collisions, is already  included
as a  microscopic variable.

We conclude by remarking that 
a challenging program
would be a variational characterization of the solutions to the space non-homogeneous
Boltzmann equation.
A result in this direction can be found in \cite{EH}, where  a generalization of \cite{Er2} is established for a ``fuzzy'' Boltzmann equation.

The remainder of the paper is organized as follows. Section 2 introduces the measure-flux framework and develops the variational formulation of the HBE. Section 3 connects the variational solution to the Kac walk, proving convergence and
propagation of the entropic chaoticity in time.

\section{Variational solution to the homogeneous Boltzmann equation}
We start by introducing the main objects that we will use in this paper, in an abstract setting.
Let  $\mc X$ be a Polish space 
space  and $\mc B_{\mc X}$ the
Borel $\sigma$-algebra on $\mc X$. We  denote by $\mc P(\mc X)$ the space of probability measure on $(\mc X,\mc B_{\mc X})$. 
Given $\mu$, $\nu \in \mc P(\mc X)$, we denote by $\Ent(\mu \vert \nu)$ the relative entropy of $\mu$ with respect to $\nu$, namely
\begin{equation}\label{def:Ent}
  \Ent(\mu \vert \nu)= \begin{cases}
    \int \de \mu\ln \frac {\de \mu}{\de {\nu}} & \mbox{if } \mu\ll \nu\\
    +\infty & \mbox{otherwise}.
    \end{cases}
\end{equation}
Equivalently, it can be defined using the variational formulation
\begin{equation}\label{vdef:Ent}
\Ent(\mu\vert\nu)=\sup_{\phi\in C_b(\mc X)}\Big\{\mu(\phi) - \log \nu(e^\phi) \Big\}.
\end{equation}  
The definition of the relative entropy can be extended to finite positive measures $\mc V$, $\tilde{\mc V}$  on
$\mc X$, by considering the functional
\begin{equation}\label{vdef:E}
\E(\mc V\vert \tilde{\mc V})=\sup_{\phi\in C_b(\mc X)}\big\{\mc V(\phi)- \tilde{\mc V}(e^\phi -1)   \big\},
\end{equation}  
which is equivalent to
\begin{equation}\label{def:E}
\E(\mc V\vert \tilde{\mc V})=\begin{cases}
    \int \de \mc V\ln \frac {\de \mc V}{\de \tilde{\mc V}} - {\carlo \mc V(\mc X) + \tilde{\mc V}(\mc X)} & \mbox{if } \mc V\ll \tilde{\mc V}\\
    +\infty & \mbox{otherwise}.
    \end{cases}
\end{equation}  
By the variational definition, both $\Ent$ and $\E$ are non-negative, convex and lower semicontinuous and they vanish if and only if the two measures are the same.
For every Polish space $Y$ and Borel measurable function $\phi: X \to Y$ it holds that
\begin{equation}\label{AGS}
  \E(\mc V\vert \tilde{\mc V}) \ge E(\phi_{\#} \mc V \vert \phi_{\#}  \tilde{\mc V}),
\end{equation}
see \cite[Lemma 9.4.5]{AGS}.

\subsection*{The homogeneous Boltzmann equation} We consider the homogeneous Boltzmann equation (HBE) with hard-sphere kernel,  in the weak form. Fix $T>0$ and $d\geq 2$. Let $C\big([0,T]; \mc P (\bb R^d)\big)$ be the
set of continuous paths on $\mc P(\bb R^d)$ endowed with
the topology of uniform convergence. A probability measure
$P\in C([0,T], {\carlo\mc P} (\bb R^d))$ is a weak solution to the homogeneous Boltzmann equation if it satisfies
\begin{equation}\label{wHBE}\begin{split}
  &  P_T(\phi_T)-P_0(\phi_0)-\int_0^T\!\!\de t\,P_t(\partial_t \phi)\\
  &  = \frac 1 2 \int_0^T \!\!\de t\int_{\bb R^{2d}}\int_{S^{d-1}}\de \omega \,P_t(\de v) P_t(\de v_*)B(v-v_*,\omega)
    \overline\nabla \phi_t(v,v_*, v',v_*')
\end{split}\end{equation}  
for any $\phi\in C_b([0,T], \bb R^d)$ with continuous derivative in $t$.
Here $\overline
\nabla\phi(v,v_*, v',v_*')= \phi(v')+\phi(v'_*)-\phi(v)-\phi(v_*)$, with $v', v'_*$ outgoing velocities given by the  collision rule
$$
v'=v-\big((v-v_*)\cdot\omega\big)\omega,\qquad v'_*=v+\big((v-v_*)\cdot\omega\big)\omega.
$$
The hard-sphere collision kernel $B$ has the form
\begin{equation}\label{def:B}
B(v-v_*,\omega) = \frac 1 2 |(v-v_*)\cdot\omega|.
\end{equation}

In the same spirit of \cite{BBB, BBBC2} we are going to rewrite the homogeneous Boltzmann equation in terms of a balance
equation and a constitutive equation. To this aim we consider the flow as a dynamical variable. 
We denote by $\ms M$ the  subset of the positive finite  measures $Q$  on
$[0,T]\times \bb R^{2d}\times \bb R^{2d}$  that satisfy
$Q(\de t; \de v,\de v_*, \de v',\de v_*')=Q(\de t; \de v_*,\de v, \de v',\de v_*') =Q(\de t; \de v,\de v_*, \de v'_*,\de v')$.
We consider $\ms M$ endowed with the weak topology and
the corresponding Borel $\sigma$-algebra.
By definition, the weak topology is 
the weakest topology such that the map
$Q \mapsto Q(F)$ is continuous for each $F$ in
$C_{\mathrm b}([0,T]\times \bb R^{2d}\times \bb R^{2d})$.
The  product space
$C\big([0,T]; \mc P (\bb R^d)\big)\times \ms M$ is
endowed with the product topology, and 
we  denote by $\ms S$ the subspace of
$C\big([0,T]; \mc P (\bb R^d)\big)\times \ms M$
given by the pairs
$(P,Q)$ that satisfies the balance equation
\begin{equation}
  \label{bal}
  P_T(\phi_T)-P_0(\phi_0)-\int_0^T\! \de t\, 
  P_t(\partial_t \phi_t) = Q(\overline \nabla \phi)
 \end{equation}
for each $\phi\in C_{\rm{b}}([0,T]\times \bb R^d)$.

\begin{definition}[Measure-flux solution to the HBE]
    \label{def:mfboltz}

    We say that a measure-flux pair $(P,Q)\in \ms S$
    is a  solution to the homogeneous Boltzmann equation
    if and only if $Q= Q^{P\otimes P}$, where
    \begin{equation*}
      \begin{split}
      &Q^{P\otimes P}(\de t,\de v, \de v_*,\de v',\de v'_*) \\
    &\coloneqq \de t \, \frac 12 P_t(\de v) P_t(\de v_*)
    \int_{S^{d-1}} \de \omega 
    B(v - v_*,\omega)
  \delta_{v-((v-v_*)\cdot\omega)\omega}(\de v')
      \delta_{v+((v-v_*)\cdot\omega)\omega}(\de v'_*).
      \end{split}
 \end{equation*}
\end{definition}
The above definition is justified by the fact that
$P$ solves \eqref{wHBE} if and only if  $(P,Q^{P\otimes P})\in \mc S$.

\subsection*{Variational formulation of the HBE} We provide a formulation of the homogeneous Boltzmann equation in terms of  an entropy dissipation inequality, in the same spirit of \cite{BBB} (see also \cite{BBBO}).

Let $\ell$ be the Lebesgue measure on $\bb R^d$.
For any  probability measure $P$ which is absolutely continuous with respect to $\ell$, 
we set 
\begin{equation}\label{def:H}
  \mc H(P)=\int \de P \log f
  \end{equation}
where $\de P=f\de \ell$.

Let  $\bs\zeta\colon \bb R^d\to [0, +\infty)\times \bb R^d$ be the map $\bs\zeta(v)=(\zeta_0, \zeta)(v)=(|v|^2/2, v)$. 
For fixed $e>0$, we denote by
$\ms P_e$ the subset of $\mc P(\bb R^d)$ given by the
probabilities with vanishing mean and second moment bounded by $2e$, i.e.
$P(\zeta_0)\le e$ and $P(\zeta)=0$. Observe that  $\ms P_e$
is a compact convex subset of $\mc P(\bb R^d)$, and we equip it with
the relative topology and the corresponding Borel $\sigma$-algebra.
We denote by $\ms S_e$ the subset of $\ms S$  of the pairs $(P,Q)$ with $P\in C([0,T],\ms P_e)$.

We denote by  $\Upsilon\colon [0,T]\times \bb R^{2d}\times \bb R^{2d}\to [0,T]\times \bb R^{2d}\times \bb R^{2d}$ the map
 that exchanges the incoming and outgoing velocities, i.e.
\begin{equation}
  \label{ups}
  \Upsilon (t,v,v_*,v',v'_*) = (t,v',v_*',v,v_*).
\end{equation}

\begin{definition}[Variational solution to the HBE]
  \label{def:var}
  Consider  $(P,Q)\in {\ms S}$, such that ${\carlo e_0 :}= P_0(\zeta_0)<+\infty$ and $\mc H(P_0)<+\infty$.
  We say that $(P,Q)\in \ms S$ is a {\carlo variational} solution to the HBE \eqref{wHBE}
  if and only if $(P,Q)\in \ms S_{e_0}$ and
\begin{equation}
  \label{gfB}
  \mc H(P_T) + 
    \E(Q \vert Q^{P\otimes P}) +
    \E(Q \vert \Upsilon_{\#}Q^{P\otimes P}) \,\le \, \mc H(P_0).
  \end{equation}
\end{definition}

As we will show  in Proposition \ref{prop:QQpi},  the reverse inequality 
 holds for any $(P,Q)\in \ms S_e$, $e>0$,  such that $\mc H(P_0)<+\infty$.
Observe that Definition \ref{def:var} 
selects the solutions
whose energy does not exceed its initial value.

\begin{proposition}\label{prop:QQpi}
  Fix $e>0$. For any $(P,Q)\in {\ms S}_{e}$, such that $\mc H(P_0)<+\infty$
\begin{equation}
  \label{revineq}
  \mc H(P_T) + 
    \E(Q \vert Q^{P\otimes P}) +
    \E(Q \vert \Upsilon_{\#}Q^{P\otimes P}) \,\ge \, \mc H(P_0).
\end{equation}
Moreover, equality holds iff $Q=Q^{P\otimes P}$, i.e.
$(P,Q)$ is 
a measure-flux solution to the HBE.
\end{proposition}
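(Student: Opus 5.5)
The plan is a chain-rule argument for the entropy along the balance equation \eqref{bal}, combined with a pointwise algebraic identity for the two relative entropies. We may assume the left-hand side of \eqref{revineq} is finite, since otherwise there is nothing to prove. Then $Q\ll Q^{P\otimes P}$ and $Q\ll\Upsilon_{\#}Q^{P\otimes P}$.

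\emph{Step 1: densities.} Write $A:=Q^{P\otimes P}$. Finiteness of $\E(Q\vert A)$ forces $Q$ to be absolutely continuous in $t$ with respect to $\de t$. I expect finiteness of the whole left-hand side, together with $P\in C([0,T],\ms P_e)$, to force $P_t=f_t\,\ell$ for a.e.\ $t$; I am not sure of this, and would otherwise restrict the argument to the set of times where it holds. Two facts then give the density of $\Upsilon_{\#}A$. First, $(v,v_*,\omega)\mapsto(v',v'_*,\omega)$ preserves $\de v\,\de v_*\,\de\omega$. Second, $B(v-v_*,\omega)=B(v'-v'_*,\omega)$. Hence $\Upsilon_{\#}A=r\,A$ on $\{f f_*>0\}$, with
\[
r=\frac{f_t(v')f_t(v'_*)}{f_t(v)f_t(v_*)},\qquad \log r=\overline\nabla\log f_t .
\]
The set where $f f_*=0<f'f'_*$ has to be treated separately. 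On it $A$ vanishes, so $Q$, which is absolutely continuous with respect to $A$, does not charge it.

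\emph{Step 2: the identity.} Let $q=\de Q/\de A$. Pointwise,
\[
q\log q-q+1 \;+\; q\log\tfrac{q}{r}-q+r \;=\; 2\,(q\log q-q+1)\;-\;q\log r\;+\;(r-1).
\]
Integrating against $A$, and using that $A$ and $\Upsilon_{\#}A$ have the same total mass (so $A(r-1)=0$), gives
\[
\E(Q\vert A)+\E(Q\vert \Upsilon_{\#}A)\;=\;2\,\E(Q\vert A)\;-\;Q(\overline\nabla\log f).
\]

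\emph{Step 3: chain rule.} Formally, taking $\phi=\log f_t$ in \eqref{bal} gives $\mc H(P_T)-\mc H(P_0)=Q(\overline\nabla\log f)$; the term $P_t(\partial_t\log f_t)=\partial_t\int f_t\,\de\ell$ vanishes. Substituting this into Step 2 yields
\[
\mc H(P_T)+\E(Q\vert A)+\E(Q\vert\Upsilon_{\#}A)-\mc H(P_0)=2\,\E(Q\vert A)\ge 0,
\]
which is \eqref{revineq}. Equality holds iff $\E(Q\vert A)=0$, i.e.\ $Q=Q^{P\otimes P}$, by the vanishing property of $\E$.

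\emph{Main obstacle.} The main obstacle is making the chain rule in Step 3 rigorous, since $\log f_t$ is neither bounded nor time-differentiable. I would first try to prove only the inequality $\mc H(P_T)-\mc H(P_0)\ge Q(\overline\nabla\log f)$, which is all that Step 3 needs for \eqref{revineq}, by duality. For smooth bounded test functions $\phi$, the convexity of $\mc H$ together with \eqref{bal} controls the increment of $\mc H$ through $Q(\overline\nabla\phi)$. I would then approximate $\log f$ by truncated and mollified $\phi_n$, e.g.\ $\log\big((f*\rho_n)\vee n^{-1}\big)\wedge n$ further regularized in $t$. Passing to the limit in $Q(\overline\nabla\phi_n)$ should use the bound $Q(|\overline\nabla\log f|)<\infty$. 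This bound follows from the finiteness of $\E(Q\vert A)+\E(Q\vert\Upsilon_{\#}A)$ via the Young-type inequality $q|\log r|\le q\log\frac{q}{1}+q\log\frac{q}{r}+\dots$. The energy bound $P_t(\zeta_0)\le e$ keeps $\mc H(P_t)$ bounded below, uniformly in $t$, and controls the Gaussian tails needed in these estimates. If the duality route fails, the fallback is to regularize $P$ (convolution plus a small Maxwellian mixture), apply Steps 1--3 to the regularized pair, and pass to the limit using lower semicontinuity of $\E$ and $\mc H$.
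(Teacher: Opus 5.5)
Your architecture matches the paper's. Steps 2 and 3 together are the entropy balance \eqref{cr}, written as $\mc H(P_T)-\mc H(P_0)=\E(Q\vert Q^{P\otimes P})-\E(Q\vert \Upsilon_{\#}Q^{P\otimes P})$, which is equivalent to \eqref{cr} on $\ms P_e$ by \eqref{heh}. The paper, like you, then adds $\E(Q\vert Q^{P\otimes P})$ to both sides. The difference is that the paper does not prove the balance. It imports it from \cite{BBBH}, Proposition 3.1, in the strong form ``if either side is finite, so is the other, and equality holds''. In your proposal this is exactly the step left formal, and the rigorous route you sketch does not supply it. (A minor point in Step 2: $\Upsilon_{\#}A$ may charge $\{ff_*=0\}$, so $A(r-1)=0$ is false in general. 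The identity still holds if you keep the last term of $\E(Q\vert\Upsilon_{\#}A)$ as the total mass of $\Upsilon_{\#}A$, which equals that of $A$.)

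The concrete gap is this. Proving only $\mc H(P_T)-\mc H(P_0)\ge Q(\overline\nabla\log f)$ gives \eqref{revineq} and ``equality $\Rightarrow Q=Q^{P\otimes P}$'', but not the converse. If $Q=Q^{P\otimes P}$, equality in \eqref{revineq} means $\mc H(P_T)+\E(Q^{P\otimes P}\vert\Upsilon_{\#}Q^{P\otimes P})=\mc H(P_0)$. Its ``$\le$'' half is the entropy identity for an arbitrary measure-flux solution in $\ms S_e$ (cf.\ \cite{Lu}). That needs the opposite inequality $\mc H(P_T)-\mc H(P_0)\le Q(\overline\nabla\log f)$. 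It first needs finiteness of the dissipation, which your reduction ``assume the left-hand side is finite'' cannot provide in this direction.

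Your duality/mollification sketch is also not yet an argument even for the one-sided bound:
\begin{itemize}
\item Duality bounds $\mc H(P_T)$ only from below. Controlling the increment therefore needs the subgradient inequality at $P_{t_k}$, i.e.\ testing \eqref{bal} on $[t_k,t_{k+1}]$ with $\log f_{t_k}$ itself.
\item Refining the partition then requires $Q$-integrability of $\overline\nabla\log f_s$ against $Q$ at times $t\neq s$. The same-time bound $Q(|\overline\nabla\log f|)<\infty$ does not give this.
\item Truncated or mollified $\phi_n$ admit no domination of $\overline\nabla\phi_n$ by $\overline\nabla\log f$, because truncation destroys the cancellations inside $\overline\nabla$.
\item $P_t\ll\ell$ for every $t$ is an output of the balance, not something you can assume or restrict to.
\end{itemize}
The missing ingredient is the two-sided chain rule, together with propagation of finiteness. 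Once you invoke it, as the paper does, the rest of your argument is complete.
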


\begin{proof}
  We restrict to the pairs $(P,Q)\in \ms{S}_e$ such that the functionals on the {\carlo left} hand side of \eqref{revineq} are finite, otherwise the inequality is trivial.
  Denote by $M_e$ the $d$-dimensional Maxwellian of  zero mean and energy $e$, namely with inverse temperature $\beta=\frac d {2 e}$. For any $\pi\in \ms P(\bb R^d)$, define the functional
  \begin{equation}\label{def:He}
    \He(\pi)=
    \begin{cases}\Ent(\pi\vert M_e)+\beta \big(e-\pi(\zeta_0)\big) & \mbox{if } \pi\in\ms P_e\\
      +\infty & \mbox{otherwise}
      \end{cases}
  \end{equation}  
The following equality holds (\emph{entropy balance})
  \begin{equation}
    \label{cr}
     \He(P_0) +\E(Q \vert Q^{P\otimes P})= \He(P_T) + \E(Q \vert \Upsilon_{\#}Q^{P\otimes P}),
   \end{equation}
  as stated in \cite{BBBH}, Proposition 3.1.
  The identity  holds in the sense that if either side is finite,
then also the other one is finite and equality holds.
  By adding to both sides $\E(Q \vert Q^{P\otimes P})$, {\carlo which is finite by assumption, } we obtain 
\begin{equation*}
 \He(P_T) + \E(Q \vert \Upsilon_{\#}Q^{P\otimes P})+\E(Q \vert Q^{P\otimes P})=\He(P_0) +2 \E(Q \vert Q^{P\otimes P})\geq \He(P_0),
\end{equation*}
where equality holds if and only if {\carlo $\E(Q \vert Q^{P\otimes P}) = 0$ so that}  $Q=Q^{P\otimes P}$. The proof is concluded by observing that  $\Ent(\pi\vert M_e)=\mc H(\pi)+\frac d {2e}\pi(\zeta_0)+\frac d 2 \ln \frac{4\pi e}d$, therefore by definition \eqref{def:He}, for any $\pi \in \ms P_e$
\begin{equation}
  \label{heh}
\He(\pi)=\mc H (\pi)
+\frac d 2\Big(\ln\frac{4\pi e}d +1 \Big).
\end{equation}  
\end{proof}

\begin{remark}
  Observe that \eqref{cr} holds for any measure-flux solution to the HBE,
  included the Lu-Wennberg solutions, which have increasing energy, as already stated in \cite{Lu}.
  We stress that the required condition $(P,Q)\in \ms S_{e_0}$ in Definition \eqref{def:var},
  which exclude Lu-Wennberg solutions, naturally follows from the microscopic derivation of \eqref{gfB}
  from the Kac walk, under the assumption of entropic chaoticity of the initial datum,  as we will explain in detail.
    
\end{remark}

\begin{theorem}\label{theo:eu}
  There exists a unique variational solution $(P,Q)$ to the homogeneous Boltzmann equation.
  Moreover $P_t(\zeta_0)=P_0(\zeta_0)$ for
  all $t\in [0,T]$.
\end{theorem}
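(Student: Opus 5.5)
The statement is implicitly relative to a fixed initial datum $P_0$ with $e_0=P_0(\zeta_0)<+\infty$ and $\mc H(P_0)<+\infty$. The plan is to reduce everything to the equality case of Proposition~\ref{prop:QQpi} and then invoke the known well-posedness theory for energy-conserving hard-sphere solutions.

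\emph{Step 1 (variational solutions are measure-flux solutions).} Let $(P,Q)$ be a variational solution. Then $(P,Q)\in\ms S_{e_0}$ and $\mc H(P_0)<+\infty$, so Proposition~\ref{prop:QQpi} applies with $e=e_0$. It gives the reverse inequality \eqref{revineq}, and together with \eqref{gfB} this forces equality. By the equality clause of the proposition, $Q=Q^{P\otimes P}$. Since $(P,Q^{P\otimes P})\in\ms S$, $P$ is a weak solution of \eqref{wHBE}.

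\emph{Step 2 (energy is conserved).} Because $P\in C([0,T],\ms P_{e_0})$, we have $P_t(\zeta_0)\le e_0$ for all $t$. For the reverse inequality I would test \eqref{bal} with truncations $\phi_R=\zeta_0\wedge R$ (smoothed). The right-hand side of \eqref{bal}, applied to $\zeta_0$, vanishes by energy conservation in each collision. For the truncated test functions, $\overline\nabla\phi_R$ is supported where some velocity is large. Moreover $\overline\nabla\phi_R\ge -(\zeta_0(v)+\zeta_0(v_*))\mathbf 1_{\{\max>R\}}$, and I would aim to show that the negative part tends to zero. This uses $\int (1+|v|)|v|^2 P_t\otimes P_t\,\de t$; it is not obviously finite, so a cleaner route is the standard fact that for hard spheres weak solutions have nondecreasing energy (Lu's result \cite{Lu}), i.e.\ $P_t(\zeta_0)\ge P_0(\zeta_0)$. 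I expect this to be the main obstacle, since it needs either a citation or a Fatou-type argument exploiting the positivity of the gain term. Combining the two inequalities gives $P_t(\zeta_0)=e_0$.

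\emph{Step 3 (uniqueness and existence).} With energy conserved and $\mc H(P_0)<\infty$, uniqueness follows from the uniqueness of energy-conserving solutions to the hard-sphere HBE \cite{MW}; the flux is then determined as $Q^{P\otimes P}$. For existence, take the energy-conserving solution $P$ constructed in \cite{MW} (or Arkeryd's solution) and set $Q=Q^{P\otimes P}$. Momentum is conserved, so $P_t(\zeta)=0$, and energy is conserved, so $P\in C([0,T],\ms P_{e_0})$. The entropy balance \eqref{cr} then gives equality in \eqref{revineq}, which is \eqref{gfB}. Here one needs $\E(Q|\Upsilon_\#Q)$ finite; this follows from \eqref{cr} since the left-hand side is finite ($\He(P_0)<\infty$ and $\E(Q|Q^{P\otimes P})=0$). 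Alternatively, existence could come from the Kac walk limit proved later, but citing \cite{MW} suffices.
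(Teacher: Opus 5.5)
Your Steps 1--2 and the uniqueness half of Step 3 coincide with the paper's proof. The reverse inequality of Proposition~\ref{prop:QQpi} forces equality in \eqref{gfB}, hence $Q=Q^{P\otimes P}$. Energy is nondecreasing for weak hard-sphere solutions \cite{Lu,MW} and bounded by $e_0$ because $P\in C([0,T],\ms P_{e_0})$. Uniqueness then comes from \cite{MW}.

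Incidentally, the truncation route you set aside in Step 2 does close. For $\phi_R=\zeta_0\wedge R$ one has, on the collision manifold, $\overline\nabla\phi_R\ge-\min\{\zeta_0(v),\zeta_0(v_*)\}\mathbf 1_{\{\zeta_0(v)+\zeta_0(v_*)>R\}}$. Moreover $\min\{|v|^2,|v_*|^2\}\,|v-v_*|\le 2(|v||v_*|)^{3/2}$ is $P_t\otimes P_t$-integrable using second moments only. Dominated and monotone convergence then give $P_t(\zeta_0)\ge P_0(\zeta_0)$ without citing \cite{Lu}.

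The genuine difference is existence. The paper obtains it from the Kac-walk limit, Theorem~\ref{main_theo}: limit points satisfy \eqref{finale} and are therefore variational solutions. You instead take the energy-conserving solution of \cite{MW}, set $Q=Q^{P\otimes P}$, and verify \eqref{gfB} through the entropy balance \eqref{cr}. You correctly note that finiteness of its left side ($\He(P_0)<\infty$, $\E(Q\vert Q^{P\otimes P})=0$) gives finiteness of $\E(Q\vert \Upsilon_\# Q^{P\otimes P})$ and equality, and then \eqref{heh} converts $\He$ to $\mc H$.

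Your route buys several things:
\begin{itemize}
\item it keeps the theorem inside Section~2;
\item it uses only tools the paper already invokes (\cite{MW} for uniqueness, \eqref{cr} for Proposition~\ref{prop:QQpi});
\item it shows that \eqref{gfB} holds with equality for the classical solution;
\item it sidesteps a point the paper passes over: applying Theorem~\ref{main_theo} to a given admissible $P_0$ requires a $P_0$-entropically chaotic sequence on $\Sigma^N_{e_0}$, and the existence of such a sequence under only finite energy and entropy is not addressed.
\end{itemize}
The paper's route, in exchange, presents existence as an output of the microscopic dynamics, which is the programme of the paper. In both arguments the zero-mean condition $P_0(\zeta)=0$ is tacit; you use it when you write $P_t(\zeta)=0$.
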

\begin{proof}
  The existence follows from the derivation from the Kac's walk,
  stated in Theorem \ref{main_theo}.
  By Proposition \ref{prop:QQpi}, equality in \eqref{gfB}
  holds iff $Q=Q^{P\otimes P}$, i.e. $(P,Q)$ is a weak solution
  to the HBE. Since for any weak solution to the HBE the energy
  is not decreasing \cite{Lu, MW}, and by Definition \ref{def:var}  the energy does not exceed its initial value, then it is conserved.
  The statement follows by the uniqueness of the energy preserving solution to
  the HBE, see \cite{MW}. 
\end{proof}

\subsection{Relation with formulation in \cite{BBB, BBBO}}
In \cite{BBB, BBBO} the authors proposed a variational formulation of the linear Boltzmann equation and of the homogeneous Boltzmann-type equation respectively, involving  dual dissipation potential with the $\cosh$ structure. More recently, in \cite{EH} a variational formulation of this kind has been proposed for a fuzzy Boltzmann equation. In this subsection  we enlighten the relation with the variational formulation given in Definition \ref{def:var}. We will omit all the technical details.

We start by introducing a pair of functionals $\mc D$ and $\mc R$.
Fix $e>0$. Consider $\pi\in \ms P_e$  such that $\Ent(\pi\vert M_e)<+\infty$ and denote $f$ the density of $\pi$ w.r.t. the Lebesgue measure $\ell$. 
 We define
  \begin{equation}
    \label{eq:Dexpl}
    \begin{split}
      \mc D(\pi) =&
      \int \de v \de v_* \de \omega f(v) f(v_*) B(v-v_*,\omega) 
      \\&- \int \de v \de v_* \de \omega
      \sqrt{
        f(v) f(v_*)f_t(v') f(v'_*)}
      B(v-v_*,\omega)
    \end{split}
  \end{equation}
  By direct computation, $\mc D$ is one half of the Dirichlet form of $\sqrt{ff_*}$, with the usual notation $f_*=f(v_*)$.

For any $(P,Q)\in \ms S_e$, define the (finite)  measure $R$ on $[0,T]\times \bb R^{2d}\times \bb R^{2d}$ as
\begin{equation*}
  \de R^{P\otimes P}=\de t\frac 1 2 \sqrt{
    f_t(v) f_t(v_*)f_t(v') f_t(v'_*)}
  B(v-v_*,\omega)\de\omega\de v \de v_*,
\end{equation*}
where $P_t(\de v)=f_t(v)\de v$. The functional $\E(Q\vert R^{P\otimes P})$ corresponds to the ``kinematic cost'' defined in \cite{BBB}.
By direct computation 
  \begin{equation}
    \label{eq:drjj}
    \int_0^T \mc D(P_t) +
    \E(Q\vert R^{P\otimes P}) = \E(Q \vert \Upsilon_{\#}Q^{P\otimes P})+\E(Q \vert Q^{P\otimes P}).
  \end{equation}

  \section{Convergence from the microscopic dynamics}
  In this section we prove that the empirical measure of the Kac's walk converges  to the unique {\carlo variational} solution to the HBE from the Kac's model. As a by-product, we prove entropic propagation of chaos under the only assumption that the initial particle distribution is entropically chaotic.

\subsection{The Kac's walk}
We consider the Kac's walk given by the Markov process $(\bs v(t))_{t\geq 0}$ on the configuration space $\big(\bb R^d\big)^N$, with  $d\geq 2$,
whose generator acts on bounded continuous functions
$f\colon \big(\bb R^d\big)^N\to \bb R$ as
\begin{equation*}
  \mathcal L_N f(\bs v)=\frac 1 N \sum_{\{i, j\}} L_{i,j} f (\bs v).
\end{equation*}  
Here the sum is carried over the unordered pairs
$\{i, j\}\subset \{1,.., N\}$, $i\neq j$, and
\begin{equation*}
  L_{i,j} f(\bs v) =
  \int_{\bb S_{d-1}}\!\! \de \omega \,B(v_i- v_j, \omega)\big[f
  \big(T^{\omega}_{i,j} \bs v\big ) -f(\bs v)   \big]. 
\end{equation*}  
Here $S^{d-1}$ is the sphere in $\bb R^d$, the post-collisional vector of velocities is given by
\begin{equation}
  \label{rules}
  \big(T^{\omega}_{i,j} \bs v\big )_k = \begin{cases}
    v_i + (\omega \cdot (v_j-v_i))\omega  & \textrm{if } k=i\\
    v_j - (\omega \cdot (v_j-v_i))\omega  & \textrm{if } k=j\\
    v_k & \textrm{otherwise},
    \end{cases}
\end{equation}  
and the collision kernel $B$ is given by
\begin{equation}\label{eq:B}
  B(v-v_*, \omega)=\frac 12 |(v-v_*)\cdot \omega|.
\end{equation}

The collisional dynamics preserves the total particle number, momentum and energy, given by the integrals of
\begin{equation}\label{def:zeta}
\bs \zeta\colon \bb R^d\mapsto [0, +\infty)\times
\bb R^d, \qquad\bs \zeta =(\zeta_0,
\zeta)(v)=(|v|^2/2, v).
\end{equation}
Therefore it  can be
restricted to the set $\Sigma^N_{e,u}$
\begin{equation*}\label{sig}
  \Sigma^N_{e,u}\coloneqq \Big\{\bs v \in\big(\bb R^d\big)^N \colon\,
  \frac 1 N\sum_{i=1}^N \bs \zeta(v_i) = (e,u)
  \Big\}. 
\end{equation*}
By Galilean invariance, we can then choose $u=0$ and set from now on $\Sigma^N_e=\Sigma^N_{e,0}$.
The Markov process $(\bs v(t))_{t\geq 0}$ is ergodic and reversible with respect  to $\alpha^N$, the uniform probability measure  on $\Sigma^N_e$.

Fix hereafter $T>0$. Given a probability $\nu$ on $\Sigma^N_e$ we denote
by $\bb P_\nu^N$ the law of the process $(\bs v(t))_{t\geq 0}$ on the time interval $[0,T]$, starting from $\nu$.
Observe that $\bb P_\nu^N$ is a probability on the Skorokhod
space $D([0,T];\Sigma^N_e)$.
As usual, if $\nu=\delta_{\bs v}$ for some $\bs v \in \Sigma^N_e$, the
corresponding law is simply denoted by $\bb P_{\bs v}^N$.

We denote by  $P^N_t$ the law of the Markov chain at time $t$,
and by $f^N_t$ its density with respect to the invariant measure $\alpha^N$. Then $f^N$ satisfies the Kac's master equation
\begin{equation}\label{KME}
\partial_t f^N_t = \mathcal L_N f^N_t.
\end{equation}

\subsection{Variational formulation of the Kac's master equation}

We rewrite the master equation for the the Kac's walk in term of measure-flux pair.
Given $T>0$,
let  $C\big([0,T]; \mc P (\Sigma^N_e)\big)$ be the set of continuous paths on $\mc P(\Sigma^N_e)$
endowed with the topology of uniform convergence, and  denote by 
$\mc M$ 
the set of finite measures on $[0,T]\times \Sigma^N_e\times \Sigma^N_e$
endowed with the weak topology and the corresponding Borel $\sigma$-algebra.
Given by $P^N\in C\big([0,T]; \mc P (\Sigma^N_e)\big)$, define 
the measure $\mc V^{P^N}\in \mc M$ as
\begin{equation}\label{def:VP}
  \mc V^{P^N}(\de t, \de \bs v, \de \bs u)=
  \de t\, P^N_t(\de \bs v)\frac 1 N \sum_{i< j}\int_{\bb S^{d-1}}\de \omega B(v_i-v_j,\omega) \delta_{T^\omega_{i,j}(\bs v)}(\de \bs u).
\end{equation}
Then $P^N$ is the law of the Kac's walk iff the pair $(P^N,\mc V^{P^N})$ satisfies the balance equation 
\begin{equation}\label{Mbe}
  P^N_T(F_T) - P^N_0(F_0) - 
\int_0^T \!\de t \,P^N_t(\partial_tF) 
=  \int\!\mc V^{P^N}(\de t,\de \bs v, \de \bs u) \, 
\big[F_t(\bs u) -F_t(\bs v)\big],
\end{equation}
for all bounded continuous functions $F\in C_b([0,T]\times \bb R^{dN})$, with continuous and bounded derivative
in time.

With a slight abuse of notation, we still denote by $\Upsilon$  the involution operator 
$\Upsilon \colon [0,T]\times \big(\bb R^{dN}\big)^2 \to [0,T] \times\big(\bb R^{dN}\big)^2$
that exchange the incoming and outgoing velocities, namely
\begin{equation*}
\Upsilon (t, \bs v,\bs u)=(t, \bs u,\bs v).
\end{equation*}  
For any solution $P_t$ of the Kac's
master equation \eqref{Mbe} with  $\Ent (P^N_0\vert \alpha^N) < +\infty$,
the entropy production equality reads
\begin{equation}
  \label{eq:EntN}
  \Ent(P^N_T\vert \alpha^N)+\E(\mc V^{P^N} \vert {\Upsilon_{\#}}\mc V^{P^N}) = \Ent(P^N_0\vert \alpha^N).
\end{equation}

\subsection*{Empirical observables}

Recall that 
$\ms P_e$ is the subset of $\mc P(\bb R^d)$ given by the
probabilities with 
$\pi(\zeta_0)\le e$ and $\pi(\zeta)=0$.
Let $D\big([0,T]; \ms P_e\big)$ be the set of
$\ms P_e$-valued c{\'a}dl{\'a}g paths endowed with the
Skorokhod topology and the corresponding Borel $\sigma$-algebra.
The \emph{empirical measure} is the map $\pi^N \colon \Sigma^N_e \to
\ms P_e$ defined by 
\begin{equation}
  \label{1}
  {\carlo(v_1,\ldots, v_n) \mapsto } \pi^N(\bs v)\coloneqq \frac 1 N \sum_{i=1}^N \delta_{v_i}.
\end{equation}
With
a slight abuse of notation we denote also by $\pi^N$ the map from
$D\big([0,T]; \Sigma_e^N \big)$ to ${\carlo D}\big([0,T]; \ms P_e\big)$
defined by $\pi^N_t(\bs v{\carlo(\cdot)})\coloneqq \pi^N(\bs v(t))$, $t\in [0,T]$.

Recall that  $\ms M$ is the  subset of the finite measures $Q$  on
$[0,T]\times \bb R^{2d}\times \bb R^{2d}$  that satisfy
$Q(\de t; \de v,\de v_*, \de v',\de v_*')=Q(\de t; \de v_*,\de v, \de v',\de v_*') =Q(\de t; \de v,\de v_*, \de v'_*,\de v')$,
endowed with the weak topology and
the corresponding Borel $\sigma-$algebra.

The \emph{empirical flow} is the map $Q^N\colon D\big([0,T]; \Sigma^N_e\big) \to \ms M$
defined by
\begin{equation*}
(v_1(\cdot),\ldots, v_n(\cdot)) \mapsto Q^N(\bs v(\cdot))
\end{equation*}
such that for every $F \in C_b([0,T]\times \bb R^{2d}\times \bb R^{2d}; \bb R)$ satisfying $F(t; v, v_*, v',v_*')$ $=F(t;  v_*, v, v',v_*') = F(t;  v, v_*, v'_*,v')$, it holds
\begin{equation}
  \label{2}
  Q^N(\bs v) (F) \coloneqq \frac 1N
  \sum_{\{i,j\}} \sum_{k\ge 1} F\big(\tau^{i,j}_k;
  v_i(\tau^{i,j}_k-),v_j({\tau^{i,j}_k}-),
  v_i(\tau^{i,j}_k),v_j(\tau^{i,j}_k)\big)\,, 
\end{equation}
where $(\tau^{i,j}_k)_{k\ge 1}$ are the
jump times of the pair $(v_i,v_j)$. Here, $v_i(t-) = \lim_{s\uparrow t} v_i(s)$.
In view of the conservation of the energy and momentum, the
measure $Q^N(\de t;\cdot)$ is supported on 
$\ms E \coloneqq \{
\bs \zeta(v)+\bs \zeta(v_*)=\bs \zeta(v')+\bs \zeta(v_*)
\}\subset \bb R^{2d}\times \bb R^{2d}$.

We extend the set $\ms S_e$ to the pairs $(P,Q)\in D([0,T]; \ms P_e\big)\times \ms M$.
For each $\bs v \in \Sigma^N_e$, with $\bb P^N_{\bs v}$ probability
one, the pair $(\pi^N,Q^N)$ belongs to $\ms S_e$.
We will denote by $\Theta^N$ the law of $(\pi^N,Q^N)$, namely $\Theta^N = (\pi^N,Q^N)_{\#} \bb P^N_{\nu}$. Therefore $\Theta^N$ is a probability measure on $D\big([0,T]; \ms P_e\big)\times \ms M$.
Denote by $\Xi^N$ the first marginal of $\Theta^N$, namely the law of the empirical path $\pi^N$.
For any  $t\in[0,T]$  let $e_t:D([0,T];\ms P_e)\to \ms P_e$ be the evaluation map  $e_t(P) = P_t$, and denote by  
$\Xi^N_t:=(e_t)_{\#}\Xi^N$. Note that $\Xi^N_t = \pi^N_\#P^N_t$,  which is the law of the empirical measure at time $t$.

\subsection{Convergence to the homogeneous Boltzmann equation}

For each $N$ let $P^N_0\in \mc P(\Sigma^N_e)$ be the initial
distribution of the Kac's walk. We say that $P^N_0$ is $P_0$-\emph{chaotic}
if for any $k$
$$\lim_{N\to +\infty }P_0^{N, (k)}= {P_0}^{\otimes k},$$
where we denote by $P_0^{N,(k)}$ the $k$-marginal of $P^N_0$.
Note that this definition is equivalent to $\Xi^N_0 \rightharpoonup \delta_{P_0}$.

We say that the initial distribution $P^N_0$ is $P_0$-\emph{entropically chaotic} if it is $P_0$-chaotic and
$$
\lim_{N\to +\infty} \frac 1 N \Ent(P^N_0\vert \alpha^N)=\Ent(P_0\vert M_e).
$$

We can now formulate our main result.

\begin{theorem}\label{main_theo}

  Let $P^N_0\in \mc P(\Sigma^N_e)$ be the initial distribution of the Kac's walk. Assume  that 
 $P_0^N$ is $P_0$-entropically chaotic. 
  Then $P_0(\zeta_0)=e$, and 
  $$\Theta^N\rightharpoonup \delta_{P, Q^{P\otimes P}}\ \text{ as } N\to \infty,$$
  where $(P,  Q^{P\otimes P})$ is the unique variational solution to the homogeneous Boltzmann
  equation \eqref{gfB} with initial datum $P_0$.
  
Moreover, for every $t\in [0,T]$
  \begin{equation*}
\lim_{N\to +\infty} \frac 1 N \Ent(P^N_t\vert \alpha^N)=\Ent(P_t\vert M_e).
  \end{equation*}  
\end{theorem}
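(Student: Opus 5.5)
The plan is to pass to the limit in the microscopic entropy balance \eqref{eq:EntN}, identify every limit point of $\Theta^N$ as a variational solution in the sense of Definition \ref{def:var}, and then conclude with Theorem \ref{theo:eu}. The propagation of entropic chaoticity comes out of the same argument, because the limiting inequality will turn out to be an equality.

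\textbf{Step 1: energy of the initial datum.} We first show that $P_0(\zeta_0)=e$. Since $P^N_0$ lives on $\Sigma^N_e$, the large deviation principle for the empirical measure under $\alpha^N$ (\cite{BBBC2,Nam}) holds with rate function $\He$. Its link with entropy per particle (\cite{Mar}) gives
\[
\liminf_N \tfrac1N \Ent(P^N_0\vert\alpha^N)\ge \He(P_0).
\]
Entropic chaoticity says the left-hand side equals $\Ent(P_0\vert M_e)$. By \eqref{def:He}, $\He(P_0)=\Ent(P_0\vert M_e)+\beta(e-P_0(\zeta_0))$, and $P_0\in\ms P_e$, so $P_0(\zeta_0)\ge e$. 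Hence $P_0(\zeta_0)=e$, and therefore $e_0=e$.

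\textbf{Step 2: tightness and limit points in $\ms S_e$.} The sequence $\Theta^N$ is tight. The first marginal is tight because $\ms P_e$ is compact and the jump rates are controlled: each jump moves $\pi^N$ by $O(1/N)$, and rates grow at most linearly in velocity, which is bounded in mean by the energy. Tightness of $Q^N$ in $\ms M$ follows from the bound on its total mass and from the fact that $Q^N$ is supported on $\ms E$ with energy at most $2Ne$, which gives compact-type control on $\ms E$. The balance equation \eqref{bal} is closed under limits when tested on $C_b$ functions, provided we truncate the unbounded kernel. Every limit point is therefore supported on $\ms S_e$, in fact on continuous paths, since the jumps vanish.

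\textbf{Step 3: limiting entropy inequality.} For this step we follow Erbar's construction \cite{Er2}. We push $\mc V^{P^N}$ and $\Upsilon_\#\mc V^{P^N}$ forward through the map sending $(\bs v,\bs u)$ to the two-particle collision data. By \eqref{AGS} and convexity, $\frac1N\E(\mc V^{P^N}\vert\Upsilon_\#\mc V^{P^N})$ dominates the expectation, under $\bb P^N$, of a functional that is jointly lower semicontinuous in $(\pi^N,Q^N)$. Using the variational form \eqref{vdef:E} of $\E$, we aim to show
\[
\liminf_N \tfrac1N\E(\mc V^{P^N}\vert\Upsilon_\#\mc V^{P^N})\ge \int \Theta(\de P,\de Q)\,\big[\E(Q\vert Q^{P\otimes P})+\E(Q\vert\Upsilon_\# Q^{P\otimes P})\big]
\]
for any limit point $\Theta$. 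Here the flux $Q^N$ already records the collisions, so no superposition argument is needed, and the symmetric sum arises from writing $\E(\mc V\vert\Upsilon_\#\mc V)$ in its symmetrized form. Together with the lower bound $\liminf\frac1N\Ent(P^N_T\vert\alpha^N)\ge\int\He(P_T)\,\de\Theta$ (from \cite{Mar} again) and Step 1 at time $0$, this yields
\[
\int\de\Theta\,\big[\He(P_T)+\E(Q\vert Q^{P\otimes P})+\E(Q\vert\Upsilon_\#Q^{P\otimes P})\big]\le \He(P_0).
\]
By \eqref{heh} this is \eqref{gfB} in $\Theta$-average. Proposition \ref{prop:QQpi} gives the reverse inequality pointwise, so \eqref{gfB} holds $\Theta$-a.s. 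Theorem \ref{theo:eu} then forces $\Theta=\delta_{(P,Q^{P\otimes P})}$, and full convergence follows from uniqueness of the limit.

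\textbf{Step 4: propagation of entropic chaoticity.} The chain of inequalities in Step 3 is tight at time $t$ in place of $T$. Together with energy conservation $P_t(\zeta_0)=e$, which gives $\He(P_t)=\Ent(P_t\vert M_e)$, this forces
\[
\lim_N \tfrac1N\Ent(P^N_t\vert\alpha^N)=\Ent(P_t\vert M_e).
\]
Chaoticity at time $t$ comes from $\Xi^N_t\rightharpoonup\delta_{P_t}$.

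\textbf{Main obstacle.} The hardest step will be the lower semicontinuity in Step 3 with the unbounded hard-sphere kernel. The rates $|v-v_*|$ are not bounded, and only second moments are available. I would first truncate $B$ to $B\wedge K$ in the variational formula, using test functions $\phi\in C_b$ supported on bounded velocities, and then remove the truncation by monotone convergence in $K$ inside the $\sup$.
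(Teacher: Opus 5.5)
Your Step 1, the closing argument of Step 3 (average inequality, pointwise reverse inequality from Proposition \ref{prop:QQpi}, then Theorem \ref{theo:eu}) and Step 4 follow the paper. The gap is the central inequality of Step 3: it is not justified, and as you set it up it cannot be.

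The microscopic entropy production $\frac1N\E(\mc V^{P^N}\vert\Upsilon_\#\mc V^{P^N})$ is built only from the one-time laws $P^N_t$ and the deterministic collision kernel. After the pushforward through $\Phi$, its $\liminf$ is bounded below by $\int\Xi(\de\gamma)\,\E(Q^{\gamma\otimes\gamma}\vert\Upsilon_\#Q^{\gamma\otimes\gamma})$. This is a functional of the path $\gamma$ alone and carries no information on the limit $Q$ of the random empirical flux. No ``symmetrized form'' of $\E(\mc V\vert\Upsilon_\#\mc V)$ will produce the term $\E(Q\vert Q^{P\otimes P})$.

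The mechanism you describe also fails on its own terms. Joint convexity of $\E$ bounds the entropy of a mixture from above by the average, not from below. Moreover, the natural pathwise candidate $\E(Q^N\vert Q^{\pi^N\otimes\pi^N})$ is $+\infty$, since $Q^N$ is atomic in time.

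Without knowing $Q=Q^{P\otimes P}$, you cannot use Proposition \ref{prop:QQpi} to compare a bound on $\He(P_T)+\E(Q^{P\otimes P}\vert\Upsilon_\#Q^{P\otimes P})$ with the full functional evaluated at the actual flux $Q$. Applying it to the pair $(P,Q^{P\otimes P})$ instead would presuppose that $P$ already solves the HBE. So you cannot conclude that limit points are variational solutions.

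The missing idea is a law of large numbers for the empirical flux. The paper proves it in Lemma \ref{lemma:lp}; your Step 2 omits it and only yields the balance equation. For $F\in C_b$, the process $Q^N_{[0,t]}(F)-\int_0^t\de s\,q^{\pi^N_s\otimes\pi^N_s}(F_s)$ is a martingale whose predictable quadratic variation is at most $cT\|F\|_\infty^2/N$. Here $c=c(e)$, because $B$ grows linearly and the energy is fixed. Hence $\bb E|Q^N(F)-Q^{\pi^N\otimes\pi^N}(F)|\le c/\sqrt N$, and every limit point concentrates on $Q=Q^{P\otimes P}$.

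With this in hand, $\Theta$-a.s. $\E(Q\vert Q^{P\otimes P})=0$ and $\E(Q^{\gamma\otimes\gamma}\vert\Upsilon_\#Q^{\gamma\otimes\gamma})=\E(Q\vert\Upsilon_\#Q^{\gamma\otimes\gamma})$. That gives exactly your displayed inequality, and the rest of your argument goes through. The obstacle you flag, truncating $B$, is comparatively minor.

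A smaller point in Step 2: the per-collision bound $|v|^2+|v_*|^2\le 2Ne$ is not uniform in $N$. Tightness of $Q^N$ should instead come from its compensator, using $\int_{\{|v|>R\}}|v|\,\pi^N(\de v)\le 2e/R$.
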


\begin{remark}

  We emphasize that the entropic chaoticity forces $P_0$ to have energy $e$, as we show in the proof of Lemma \ref{lemma:convent}.

\end{remark}

The proof follows the standard strategy of showing that  the sequence $\Theta^N$ is relatively compact and then identifying the limiting points using the variational structure.

\begin{lemma}[Limit points]\label{lemma:lp}
  The sequence of probability measure $(\Theta^N)_{N\geq 1}$ is tight.
  In particular, all the limit points $\Theta$ of $(\Theta^N)_{N\geq 1}$ concentrate on pairs $(P, Q)\in \ms S_e$
  such that $P \in C([0,T],\ms P_e)$
  and $Q=Q^{P\otimes P}$
\end{lemma}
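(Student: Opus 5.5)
Tightness splits into the two components. For the empirical path $\pi^N$: $\ms P_e$ is compact, so by the Aldous/Ethier--Kurtz criterion it suffices to control the modulus of continuity of $t\mapsto \pi^N_t(\phi)$ for $\phi$ in a countable dense family of $C_b(\bb R^d)$ functions. By Dynkin's martingale, $\pi^N_t(\phi)-\pi^N_0(\phi)=\int_0^t \mc L_N \pi^N(\phi)\,\de s + M^N_t(\phi)$. The drift is
\[
\mc L_N\pi^N(\phi)=\frac1{N^2}\sum_{\{i,j\}}\int \de\omega\, B(v_i-v_j,\omega)\,\overline\nabla\phi,
\]
and it is bounded by $C\|\phi\|_\infty \frac1{N^2}\sum_{i,j}|v_i-v_j|\le C'\|\phi\|_\infty\sqrt e$ on $\Sigma^N_e$. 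The quadratic variation is of order $N^{-1}$. Each jump of $\pi^N(\phi)$ has size $O(\|\phi\|_\infty/N)$, so the limit points are continuous. For the flux $Q^N$: the total mass $Q^N(1)$ is a Poisson-type counting process with intensity $\frac1N\sum_{i<j}\int B\le CN\sqrt e$, divided by $N$, so $\bb E[Q^N(1)]\le CT\sqrt e$. Tightness on $[0,T]\times\bb R^{4d}$ also needs uniform smallness of mass at large velocities. Since $Q^N$ is supported on $\ms E$, $|v'|^2\le |v|^2+|v_*|^2$, and the compensator bounds $\bb E\,Q^N(|v|>R)$ by $\bb E\int_0^T\!\de t\,\frac1{N^2}\sum_{i,j}|v_i-v_j|\mathbf 1_{|v_i|>R}$. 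This is at most $C\sqrt e\cdot(\text{mass of }|v|>R\text{ under }\pi^N_t)^{1/2}$ by Cauchy--Schwarz, and that mass is $\le 2e/R^2$. Prokhorov then gives tightness of the second marginal, hence of $\Theta^N$.

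Next I would identify the limit points, starting with the balance equation. For each $\bs v$, $\bb P^N_{\bs v}$-a.s. the pair $(\pi^N,Q^N)\in\ms S_e$. The set $\ms S_e$ is closed under the product topology because both sides of \eqref{bal} are continuous in $(P,Q)$ for $\phi\in C_b$ with $\partial_t\phi$ continuous: $\overline\nabla\phi$ is bounded continuous. At the discontinuity points of the Skorokhod topology one uses that the limit path is continuous. Hence every limit point $\Theta$ is concentrated on $\ms S_e$ with $P\in C([0,T];\ms P_e)$.

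It remains to show $Q=Q^{P\otimes P}$, i.e. a law of large numbers for the flux. For symmetric $F\in C_b$, the process $Q^N_{[0,t]}(F)-\int_0^t\de s\,\frac1{N^2}\sum_{\{i,j\}}\int\de\omega\,B(v_i-v_j,\omega)F(s;v_i,v_j,T^\omega v_i,T^\omega v_j)$ is a martingale with quadratic variation $O(\|F\|_\infty^2/N)$. So $Q^N(F)-\int_0^T\de t\,\frac12\,\pi^N_t\otimes\pi^N_t(G_F)\to0$ in probability, where $G_F(v,v_*)=\int\de\omega\,B\,F$ and the diagonal $i=j$ contributes $O(1/N)$.

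The main obstacle is passing to the limit in $\pi^N_t\otimes\pi^N_t(G_F)$, because $G_F$ is continuous but grows linearly in $|v-v_*|$. I would truncate, writing $G_F=G_F\chi_R+G_F(1-\chi_R)$ with $\chi_R$ a cutoff on $\{|v|,|v_*|\le R\}$. The first piece is bounded and continuous, so it passes to the limit by continuity of $P\mapsto P\otimes P$. The tail is bounded uniformly in $N$ and $t$ by $C\|F\|_\infty\,e/R$, using the second-moment bound $\pi(\zeta_0)\le e$ (the tail of $|v-v_*|$ is controlled by $\int |v|\mathbf 1_{|v|>R}\le 2e/R$). The same bound holds for the limit $P_t\in\ms P_e$. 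Letting $R\to\infty$ gives $Q(F)=Q^{P\otimes P}(F)$ $\Theta$-a.s. Running this over a countable convergence-determining family of symmetric $F$ yields $Q=Q^{P\otimes P}$, concluding the proof.
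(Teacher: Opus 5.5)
Your proposal is correct and follows the paper's route. The decisive step, identifying $Q=Q^{P\otimes P}$ via the compensated flux martingale $Q^N_{[0,t]}(F)-\int_0^t \frac12\pi^N_s\otimes\pi^N_s(G_F)\,\de s$ with quadratic variation $O(1/N)$, is exactly the paper's argument; the tightness, closedness of $\ms S_e$ and continuity of limit paths, which you prove with standard Aldous/Prokhorov estimates, are simply quoted in the paper from \cite[Lemmata~4.1, 4.2]{BBBC2}, and your truncation of $G_F$ via the uniform second-moment bound on $\ms P_e$ makes explicit the uniform integrability (needed because $B$ grows linearly) that the paper leaves implicit. One small correction: the diagonal term you estimate as $O(1/N)$ in fact vanishes, since $B(0,\omega)=0$.
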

We remark that this implies that all the limit points of the empirical measure satisfies the HBE.
\begin{proof}
  In \cite[Lemmata~4.1, 4.2]{BBBC2} it is proven  
  that the sequence $\Theta^N$ is tight and it concentrate on pairs $(P, Q)\in \ms S_e$
  such that $P \in C([0,T],\ms P_e)$.   It remains to show that $Q=Q^{P\otimes P}$, $\Theta$-almost surely.

  Let $(\Theta^{N_k})_{k\geq 1}$ be any weakly converging subsequence of $(\Theta^N)_{N\geq 1}$.
  By Skorokhod representation theorem, 
  we can realize all $(\pi^{N_k}, Q^{N_k})$, $k\geq 1$ on a common probability space with
  probability measure  $\Theta$, {\carlo such that $(\pi^{N_k}, Q^{N_k}) \to (P, Q)$ in $C([0,T];\ms P_e) \times \ms M$,
    $\Theta$-almost surely.}
  
  For any empirical  pair $(\pi^N, Q^N)\in \ms S_{\textrm{be}}$, denote by $Q^N_{[0,t]}$ the restriction of $Q^N$ to the time
  interval $[0,t]$, with $t\in(0, T]$.
  For any bounded continuous function $F\colon ([0,T]\times \mathbb R^{2d}\times \mathbb R^{2d})\to \bb R$, the process
  \begin{equation*}
    M^{N,F}_t= Q^N_{[0,t]}(F)-\int_0^t\de s\iint_{\mathbb R^{2d}}
    \frac 1 2 \pi_s^N(\de v)\pi_s^N(\de v^*)\int_{S^{d-1}}\de\omega B(v-v_*,\omega)F(s,v, v_*, v', v'_*)
  \end{equation*}
  is a {\carlo c\`adl\`ag} martingale, see \cite{No},
  with predictable increasing quadratic variation
  \begin{equation*}
\langle M^{N,F}\rangle_t=\frac 1 N \int_0^t\de s\iint_{\mathbb R^{2d}} \frac 1 2 \pi_s^N(\de v)\pi_s^N(\de v^*)\int_{S^{d-1}}\de\omega B(v-v_*,\omega)F^2(s,v, v_*, v', v'_*).
  \end{equation*}
  Since 
  $$
  \Theta\big(\langle M^{N,F}\rangle_T \big)\leq \frac c N  T \|F^2\|_{\infty} 
  $$
with $c=c(e)$ finite,
\begin{equation*}\begin{split}
   \Theta\big(|M^{N,F}_t|\big)=
    \Theta\big(\vert Q^N(F)-Q^{\pi^N\otimes \pi^N}(F)\vert \big)\leq \frac {c}{\sqrt N},
\end{split}\end{equation*}  
{\carlo possibly changing the value of the constant} $c$.

Passing to the limit along the converging subsequence, the first term converges a.s. to $Q(F)$. The second term  converges to $Q^{P\otimes P}(F)$
thanks to the convergence of $\pi^N$ in $C([0,T];\ms P_e)$,
therefore we obtain
\begin{equation*}
\Theta\big(\vert Q(F)-Q^{P\otimes P}(F)\vert \big)=0,
\end{equation*}
therefore the integrand is zero
$\Theta$ almost surely,  and we conclude that {\carlo $\Theta$}-a.s. $Q=Q^\pi$.
\end{proof}

Recall that $\Xi^N$ is the first marginal of $\Theta^N$, namely the law of the empirical path $\pi^N$, and
$\Xi^N_t:=(e_t)_{\#}\Xi^N$ is the law of the empirical measure at time $t$.
{\carlo Define $\mc V^{P^N} = \mc V^{P^N}_t \de t$ and
  $\Upsilon_\#\mc V^{P^N} =  \tilde{\mathcal V}^{P^N}_t \de t$.  
  Given a  pair of velocities $\bs v, \bs u$ in the support
  of $\mc V^{P^N}_t$, there exists a unique pair $(i,j)$ and $\omega
  \in S^{d-1}$  such that $\bs u= T^{i,j}_\omega \bs v$. 
  We push forward 
  $\mc V^{P^N}$ and $\Upsilon_{\#}\mc V^{P^N}$ by the map
  $\Phi: \Sigma_e^N \times \Sigma_e^N \to (\ms P_e)^2 \times (\bb R^d)^2
  \times (\bb R^d)^2$ given by
  $$
  \Phi(\bs v, \bs u)= \big(\pi^N(\bs v), \pi^N(\bs u), v_i, v_j,u_i ,u_j
  \big),
  $$
This defines a pair of measures 
$\beta^N:=  \Phi_{\#} \mathcal V^{P^N}$ and  
$\tilde \beta^N:= \left(\Phi \circ \Upsilon \right)_{\#}\mathcal V^{P^N}$ on 
$[0,T]\times (\ms P_e)^2 \times (\bb R^d)^2\times (\bb R^d)^2$, of the form
\begin{equation}\label{def:beta}
\begin{aligned}
  \beta^N(\de t, \de \eta, &\de \zeta, \de v, \de v_*, \de v', \de v'_*) \\
  &=N \de t \, \delta_{\eta^{N,v,v_*,v',v'_*}}(\de \zeta)
    \,
  \Xi^N_t(\de \eta) \, q^{\eta\otimes \eta}(\de v, \de v_*,
  \de v', \de v'_*)\\
  \tilde \beta^N(\de t, \de \eta, &\de \zeta, \de v, \de v_*,
\de v', \de v'_*) \\
  &= N \de t\, \delta_{\zeta^{N,v',v'_*,v',v'_*}}(\de \eta) \,
    \Xi^N_t(\de \zeta)\,  (\Upsilon_\# q^{\zeta\otimes \zeta})
    (\de v, \de v_*, \de v', \de v'_*),
\end{aligned}
\end{equation}
where, for any measure $\mu$,  $\mu^{N,v,v_*,u, u_*} = \mu
+ \frac{1}{N}(\delta_{u} + \delta_{u_*} - \delta_{v} - \delta_{v_*})$,
and
\begin{equation}\label{def:q}
  \begin{split}
  q^{\mu\otimes \mu}&(\de v, \de v_*, \de v', \de v'_*)  \coloneqq
  \frac{1}{2}  \mu(\de v)\mu(\de v_*)\\
  &\int_{S^{d-1}} \de \omega 
  B(v - v_*,\omega)
  \delta_{v-((v-v_*)\cdot\omega)\omega}(\de v')
  \delta_{v+((v-v_*)\cdot\omega)\omega}(\de v'_*).
  \end{split}
\end{equation}
Precisely, for any test function $F \in C_b( [0,T]\times (\ms P_e)^2 \times \bb R^{2d}\times \bb R^{2d}; \bb R)$,
\begin{equation*}
  \begin{split}
    \int F \, \de \beta^N  &
    = N \int_0^T \!\!\de t \int \Xi_t^N(d\eta) \, \\
  &\int q^{\eta\otimes \eta}  (\de v,
  \de v_*,\de v',\de v'_*)
      \, F\bigl(t, \eta, \eta^{N,v,v_*,v',v_*}, v, v_*, v',v'_*\bigr),
      \end{split}
    \end{equation*}
    and 
\begin{equation*}
    \begin{split}
      \int F \, \de {\tilde \beta}^N
      &
     = N\int_0^T \!\!\de t \int \Xi_t^N(d\eta) \, \\
      &\int q^{\eta\otimes \eta}(\de v,
  \de v_*,\de v',\de v'_*) F\bigl(t, \eta^{N,v,v_*,v',v'_*}, \eta, v', v'_*.
  v,v_*\bigr),
   \end{split}
\end{equation*}
}

From \eqref{AGS} we get
\begin{equation*}
\E\left(\mc V^{P^N}\vert \Upsilon_\#\mc V^{P^N}\right) {\carlo \ge \E\left(\beta^N \vert \tilde\beta^N \right)  = } N \E\Big(\frac{\beta^N}N\Big\vert \frac{\tilde\beta^N}N\Big),
\end{equation*}
therefore, by using \eqref{eq:EntN}
we obtain
\begin{equation}\label{ineqN}
\frac 1 N \Ent(P^N_t\vert \alpha^N)+\E\Big(\frac{\beta^N}N \Big\vert \frac{\tilde\beta^N}N\Big)\leq \frac 1 N \Ent(P^N_0\vert \alpha^N).
 \end{equation} 
We want to pass to the limit {\carlo as $N$ goes to infinity } in the above inequality. We will use two  Lemmata.

\begin{lemma}\label{lemma:convJ}
  For any  limit point $\Theta$ of $(\Theta^N)_{N\geq 1}$
  \begin{equation}\label{convJ}
    \liminf_{N \to +\infty} \E\Big(\frac{\beta^N}N \Big\vert \frac{\tilde\beta^N}N\Big)\geq \int \Theta(\de \eta, \de Q) \E(Q \vert \Upsilon_{\#}\ Q^{\eta\otimes\eta}).
  \end{equation}  
\end{lemma}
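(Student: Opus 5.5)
We will pass to the limit in the variational representation \eqref{vdef:E} of $\E$, which is a supremum of affine functionals of the measures. For any test function $F\in C_b([0,T]\times(\ms P_e)^2\times\bb R^{2d}\times\bb R^{2d})$,
\begin{equation*}
\E\Big(\frac{\beta^N}N\Big\vert\frac{\tilde\beta^N}N\Big)\ge \frac{\beta^N}{N}(F)-\frac{\tilde\beta^N}{N}(e^F-1).
\end{equation*}
The plan has three steps. First, compute the limits of the two terms on the right for fixed $F$. Second, take the supremum over $F$. Third, identify that supremum with $\int\Theta(\de\eta,\de Q)\,\E(Q\vert\Upsilon_\# Q^{\eta\otimes\eta})$.

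By \eqref{def:beta}, the first term equals
\begin{equation*}
\int_0^T\!\de t\!\int\Xi^N_t(\de\eta)\,q^{\eta\otimes\eta}\big(F(t,\eta,\eta^{N,\cdot},\cdot)\big).
\end{equation*}
Since $\eta^{N,v,v_*,v',v'_*}$ is at distance $O(1/N)$ from $\eta$ in a metric for the weak topology, uniformly in the velocities, we may replace $\eta^{N,\cdot}$ by $\eta$, using continuity of $F$ on the compact set $\ms P_e$. This reduces the term to an expectation of $Q^{\pi^N\otimes\pi^N}(F(\cdot,\pi^N_t,\pi^N_t,\cdot))$, suitably read with $\eta=\pi^N_t$ inside the time integral. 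By the martingale estimate in the proof of Lemma~\ref{lemma:lp}, this is close to the expectation of the empirical flux $Q^N$ integrated against the same time-dependent test function. To use that estimate, we extend it to test functions depending on $\pi^N_{t-}$, which is fine since they are predictable and bounded. Weak convergence of $\Theta^N$, together with Lemma~\ref{lemma:lp} ($Q=Q^{P\otimes P}$ and $P$ continuous), then gives convergence to
\begin{equation*}
\int\Theta(\de P,\de Q)\,Q\big(F(t,P_t,P_t,\cdot)\big).
\end{equation*}
The second term converges in the same way to
\begin{equation*}
\int\Theta\,\Upsilon_\#Q^{P\otimes P}\big(e^{F(t,P_t,P_t,\cdot)}-1\big).
\end{equation*}
Continuity of $P\mapsto Q^{P\otimes P}(G)$ needs control of the unbounded kernel $B\le|v|+|v_*|$. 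We obtain this from uniform integrability of $|v|$ under measures in $\ms P_e$, since the second moments are bounded by $2e$.

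Taking the supremum over $F$ gives a lower bound by
\begin{equation*}
\sup_{F}\int\Theta(\de P,\de Q)\Big[Q(F_P)-\Upsilon_\#Q^{P\otimes P}(e^{F_P}-1)\Big],\qquad F_P:=F(\cdot,P_\cdot,P_\cdot,\cdot).
\end{equation*}
The main obstacle is the final identification: exchanging the supremum with the $\Theta$-integral to recover $\int\Theta\,\E(Q\vert\Upsilon_\#Q^{P\otimes P})$. Here the test function may depend on the path $P$ only through the values $P_t$, not through the whole random pair. Since $Q=Q^{P\otimes P}$ $\Theta$-a.s., both measures are deterministic functionals of $P$. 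The plan is a measurable selection argument. For each $P$, choose a near-optimal $\phi_P$ from a countable dense family of $C_b$ functions, which is possible because $\E$ is a supremum over a countable family. Then approximate $(t,P)\mapsto\phi_P$ by continuous functions of $(t,P_t)$, using Lusin's theorem on the Polish space $C([0,T];\ms P_e)$ and dominated convergence.

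If this selection proves too delicate, the fallback is to note that the dependence on $P$ is only through the measures $P_t$, since $Q^{P\otimes P}$ is built from $P_t\otimes P_t$. The supremum over pathwise test functions can then be realized by functions $F(t,\eta,\eta,\cdot)$ via a monotone class argument. In either case we also need $\int\Theta\,Q(|F|)$-type integrability, which we obtain from the energy bound $e$.
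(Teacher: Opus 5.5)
\textbf{There is a gap in your final identification step.} Your first two steps are correct and are, in substance, the paper's argument. Testing \eqref{vdef:E} against a fixed $F$ and letting $N\to\infty$ is exactly the lower semicontinuity of $\E$ along $\beta^N/N\rightharpoonup\beta$, $\tilde\beta^N/N\rightharpoonup\tilde\beta$, where $\beta=\de t\,\Xi_t(\de\eta)\,\delta_\eta(\de\zeta)\,q^{\eta\otimes\eta}$ and $\tilde\beta=\de t\,\Xi_t(\de\eta)\,\delta_\eta(\de\zeta)\,\Upsilon_{\#}q^{\eta\otimes\eta}$. The detour through $Q^N$ and the martingale estimate is unnecessary: up to $o(1)$, $\beta^N(F)/N=\int_0^T\de t\,\Xi^N_t(G_t)$ with $G_t(\eta)=q^{\eta\otimes\eta}(F(t,\eta,\eta,\cdot))$ bounded and continuous on $\ms P_e$, and $\Xi^N_t\rightharpoonup\Xi_t$.

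The gap is in the identification step you yourself single out, and the mechanism you propose does not work as written.
\begin{itemize}
\item A near-optimal $\phi_P$ chosen path by path depends on the whole path $P$. Lusin's theorem on $C([0,T];\ms P_e)$ gives approximants that are continuous in the whole path, not functions of $(t,P_t)$. A path-dependent test function cannot in general be approximated by test functions of the form $F(t,P_t,P_t,\cdot)$.
\item The monotone class fallback is not an argument either. It propagates linear identities, whereas here a supremum of a concave functional must be moved inside an integral.
\item If you want to keep a selection argument, it has to be done slice by slice on $\ms P_e$. Choose, measurably in $\eta$, a bounded near-optimizer for $\E(q^{\eta\otimes\eta}\vert\Upsilon_{\#}q^{\eta\otimes\eta})$. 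Use the elementary bound $\E(Q^{P\otimes P}\vert\Upsilon_{\#}Q^{P\otimes P})\le\int_0^T\E(q^{P_t\otimes P_t}\vert\Upsilon_{\#}q^{P_t\otimes P_t})\,\de t$ (sup of an integral is at most the integral of the sups). Only then apply Lusin on $[0,T]\times\ms P_e\times\bb R^{2d}\times\bb R^{2d}$.
\end{itemize}

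The paper avoids any selection by one observation. The supremum you reach is, by \eqref{vdef:E} on the Polish space $[0,T]\times(\ms P_e)^2\times\bb R^{2d}\times\bb R^{2d}$, exactly $\E(\beta\vert\tilde\beta)$, and this can be evaluated with the density formula \eqref{def:E} instead of the variational one.
\begin{itemize}
\item If $\E(\beta\vert\tilde\beta)=+\infty$ there is nothing to prove. Otherwise $\beta\ll\tilde\beta$.
\item $\beta$ and $\tilde\beta$ share the base measure $\de t\,\Xi_t(\de\eta)\,\delta_\eta(\de\zeta)$, so the Radon--Nikodym derivative is the fiberwise one, $\rho=\de q^{\eta\otimes\eta}/\de(\Upsilon_{\#}q^{\eta\otimes\eta})$.
\item Then $\E(\beta\vert\tilde\beta)=\int(\rho\ln\rho-\rho+1)\,\de\tilde\beta$ has a nonnegative integrand. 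Tonelli, together with $\de Q^{\gamma\otimes\gamma}=\de t\,\de q^{\gamma_t\otimes\gamma_t}$, gives $\int\Xi(\de\gamma)\,\E(Q^{\gamma\otimes\gamma}\vert\Upsilon_{\#}Q^{\gamma\otimes\gamma})$.
\item Lemma~\ref{lemma:lp} then lets you replace $Q^{\gamma\otimes\gamma}$ by $Q$ in the first argument.
\end{itemize}
This closes the step you left open with no measurable selection and no approximation by continuous test functions.
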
  

\begin{proof}
  Denoting with $\Xi$ the first marginal of $\Theta$,
  for every $t \in [0,T]$ the family of measures $\Xi^N_t:=(e_t)_{\#}\Xi ^N$ weakly converges to $\Xi_t:=(e_t)_{\#}\Xi$.
Set 
\begin{align*}
  \beta(\de \eta, \de \zeta, \de v, \de v_*, \de v',\de v'_*)
  &=
    \de t \, \delta_{\eta}(\de \zeta) \, \Xi_t(\de \eta) \,
    q^{\eta \otimes \eta}(\de v, \de v_*,\de v',\de v'_*)\\
  \tilde   \beta(\de \eta, \de \zeta, \de v, \de v_*, \de v',\de v'_*)
  &=  \de t \, \delta_{\zeta}(\de \eta) \, \Xi_t(\de \eta) \,
  (\Upsilon_\#q^{\eta \otimes \eta})(\de v, \de v_*,\de v',\de v'_*).
\end{align*}
By Lemma
  \ref{lemma:lp}  $\Xi$ has support on $C([0,T],\ms P_e)$.
  Due to the linear growth of $B$, as
  $N$ diverges
  $\frac{\beta^N}N$ and $\frac{\tilde\beta^N}N$
  weakly converge to $\beta$
  and $\tilde \beta$ respectively.
  
By lower semicontinuity
\begin{equation*}
\liminf_{N \to +\infty} \E\Big(\frac{\beta^N}N \vert \frac{\tilde\beta^N}N\Big)\geq \E\big(\beta \vert \tilde \beta\big)
\end{equation*}  
so that $\beta$ is absolutely continuous with respect to $\tilde\beta$.

The measure Radon-Nikodym derivative   of $\beta$ with respect to $\tilde{\beta}$ is given by
\begin{equation*}
  \frac{\de \beta}{\de \tilde{\beta}}(t, \eta, \zeta, v, v_*, v',v'_*)
  = \frac{  \de q^{\eta \otimes \eta}\phantom{assa}}
  {\de (\Upsilon_\# q^{\eta \otimes \eta})} (v, v_*, v',v'_*)
   \qquad \tilde{\beta}\text{-a.e.}.
\end{equation*}
Hence 
\begin{align*}
  \E\big(\beta \vert \tilde \beta\big)
  &=
    \int_0^T \de t \int \Xi_t (\de \eta)\int \de q^{\eta\otimes \eta}\ln \frac{\de q^{\eta\otimes \eta}}{\de \big(\Upsilon_\# q^{\eta\otimes \eta}\big)} \\
  &=\int \Xi (\de \gamma) \int_0^T \de t 
    \int \de q^{\gamma_t\otimes \gamma_t}\ln \frac{\de q^{\gamma_t\otimes \gamma_t}}
    {\de \big(\Upsilon_\# q^{\gamma_t\otimes \gamma_t}\big)}\,,
\end{align*}
where in the last equality we used Fubini Theorem, due to the fact that  the argument
of the integral is non negative.

Thanks to Definition \ref{def:mfboltz} we have
$\de Q^{\gamma \otimes \gamma} = \de t  \de q^{\gamma_t \otimes \gamma_t} $ so that
$$\frac{\de q^{\gamma_t\otimes \gamma_t}}
{\de \big(\Upsilon_\# q^{\gamma_t\otimes \gamma_t}\big)} =  \frac{\de Q^{\gamma\otimes \gamma}}{\de \big(\Upsilon_\# Q^{\gamma\otimes \gamma}\big)}.$$
This implies
\begin{align*}
 \E\big(\beta \vert \tilde \beta\big) &=  \int \Xi(\de \gamma)\int \de Q^{\gamma\otimes \gamma}\ln 
\frac{\de Q^{\gamma\otimes \gamma}}{\de \big(\Upsilon_\# Q^{\gamma\otimes \gamma}\big)}.
\end{align*}
In the last integral the first marginal $\Xi$ can be replaced by the full probability measure $\Theta$. Using  that $Q^{\gamma\otimes\gamma}$ and $ \Upsilon_{\#}Q^{\gamma\otimes\gamma}$ have the same mass we obtain
\begin{equation*}
  \E\big(\beta \vert \tilde \beta\big)
  =\int \Theta(\de \gamma, \de Q)\,  \E(Q^{\gamma\otimes\gamma}\vert \Upsilon_{\#}Q^{\gamma\otimes\gamma}).
\end{equation*}
By Lemma \ref{lemma:lp}, $\Theta$ concentrates on pairs $(\gamma, Q)$ such that $Q=Q^{\gamma\otimes\gamma}$ a.s., therefore
\begin{equation*}
\int  \Theta(\de \gamma, \de Q)\, \E(Q^{\gamma\otimes\gamma}\vert \Upsilon_{\#} Q^{\gamma\otimes\gamma})=\int \Theta(\de \gamma, \de Q)\, \E(Q \vert \Upsilon_{\#} Q^{\gamma\otimes\gamma}).
\end{equation*}
\end{proof}

Recall the functional $\He$ defined in \eqref{def:He}.

\begin{lemma}\label{lemma:convent}
  For any    limit point $\Theta$ of $(\Theta^N)_{N\geq 1}$, for every $t\in [0,T]$
  \begin{equation}\label{convent}
    \liminf_{N\to \infty}
\frac 1 N\Ent\big(P^N_t\vert \alpha^N\big)\geq \int \Theta(\de \gamma, \de Q) \He(\gamma_t).
\end{equation}
\end{lemma}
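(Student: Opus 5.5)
The plan is to reduce the entropy of $P^N_t$ with respect to the microcanonical measure $\alpha^N$ to the entropy of the law of the empirical measure $\Xi^N_t=\pi^N_\# P^N_t$. Then we use the large deviation principle for the empirical measure under $\alpha^N$, whose rate function is $\He$.

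\textbf{Step 1: contraction.} By the contraction inequality (the probability version of \eqref{AGS}, i.e. data processing for $\Ent$),
\begin{equation*}
\Ent(P^N_t\vert\alpha^N)\ge \Ent(\Xi^N_t\vert A^N), \qquad A^N:=\pi^N_\#\alpha^N\in\mc P(\ms P_e).
\end{equation*}
So it suffices to prove $\liminf_N \frac1N \Ent(\Xi^N_t\vert A^N)\ge \int \Xi_t(\de\eta)\,\He(\eta)$, where $\Xi_t=(e_t)_\#\Xi$ and $\Xi^N_t\rightharpoonup\Xi_t$ along the subsequence. Since $\He(\gamma_t)$ depends only on $\gamma_t$, the right-hand side equals $\int\Theta(\de\gamma,\de Q)\,\He(\gamma_t)$.

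\textbf{Step 2: the large deviation input.} The sequence $A^N$ satisfies a large deviation principle on the compact space $\ms P_e$ with speed $N$ and good rate function $\He$ (see \cite{BBBC2,Nam}). The extra term $\beta(e-\pi(\zeta_0))$ accounts for energy escaping to infinity, which is why the weak closure $\ms P_e$ has energy $\le e$.

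\textbf{Step 3: entropy of the laws versus the rate function.} I would then invoke the general result of \cite{Mar}: if $A^N$ satisfies an LDP with speed $N$ and rate function $I$ on a compact (or exponentially tight) space, and $\Xi^N\rightharpoonup\Xi$, then $\liminf_N\frac1N\Ent(\Xi^N\vert A^N)\ge\int I\,\de\Xi$. To check this directly I would use the variational formula \eqref{vdef:Ent} with test functions $N\phi$ for $\phi\in C_b(\ms P_e)$:
\begin{equation*}
\frac1N\Ent(\Xi^N_t\vert A^N)\ge \Xi^N_t(\phi)-\frac1N\log A^N(e^{N\phi}).
\end{equation*}
By Varadhan's lemma, $\frac1N\log A^N(e^{N\phi})\to\sup_{\ms P_e}(\phi-\He)$. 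Taking the liminf gives
\begin{equation*}
\liminf_N \frac1N\Ent(\Xi^N_t\vert A^N)\ge \Xi_t(\phi)-\sup(\phi-\He).
\end{equation*}
It remains to take the supremum over $\phi$ and show $\sup_\phi\{\Xi_t(\phi)-\sup(\phi-\He)\}=\Xi_t(\He)$. A single $\phi$ only yields a global constant, so I would localize: split $\ms P_e$ into finitely many small sets, or use $\phi=k(\He\wedge M)$-type approximations combined with partitions of unity. Concretely, choose $\phi$ approximating $\He$ from below (possible since $\He$ is lower semicontinuous) and use the LDP upper bound on each piece of a finite partition. This is the content of \cite{Mar}, and I would cite it rather than redo it.

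\textbf{Main obstacle.} The delicate point is Step 2: establishing, or correctly quoting, the LDP for $\pi^N_\#\alpha^N$ on $\ms P_e$ with exactly the rate $\He$, including the energy-loss term. The lower bound in that LDP requires constructing microcanonical configurations whose energy deficit is carried by a few fast particles at cost $\beta(e-\pi(\zeta_0))$. I would rely on \cite{BBBC2,Nam} for this. Applying the Step 3 inequality at $t=0$ together with entropic chaoticity also yields $P_0(\zeta_0)=e$, as announced in the Remark.
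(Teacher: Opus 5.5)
Your proposal is correct and follows the paper's proof essentially step for step. The steps are the contraction \eqref{AGS} reducing to $\Ent(\Xi^N_t\vert \pi^N_\#\alpha^N)$, the large deviation principle for $\pi^N_\#\alpha^N$ with rate $\He$ from \cite{BBBC2}, and Theorem 3.5 of \cite{Mar} to obtain $\int \Xi_t(\de\eta)\,\He(\eta)$; as a side remark, your direct check needs no localization, since bounded continuous $\phi_k\uparrow\He$ give $\sup(\phi_k-\He)\le 0$, and the Varadhan upper bound plus monotone convergence already yield the claim.
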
  
  
\begin{proof}
  Thanks to \eqref{AGS}, for any $t\in[0,T]$
\begin{equation*}
\Ent\big(P^N_t\vert \alpha^N\big)\geq \Ent\big (\Xi^N_t\vert (\pi^N)_\#\alpha^N\big).
\end{equation*}  
By Lemma \ref{lemma:lp}, for every $t \in [0,T]$, $\Xi^N_t:=(e_t)_{\#}\Xi^N$
weakly converges to $\Xi_t:=(e_t)_{\#}\Xi$ up to a subsequence.
Set $\mu^N:=(\pi^N)_\#\alpha^N$, $N\geq 2$.
In \cite{BBBC2}, Theorem 2.2. (already in  \cite{KR}) it has been proven that $(\mu^N)_N$ satisfies a large deviation principle with speed $N$ and rate function $H_e$.
Therefore, by using  Theorem 3.5 in \cite{Mar}, for every $t \in [0,T]$
\begin{equation*}
\liminf_N \frac 1 N \Ent(\Xi^N_t\vert \mu^N)\geq \int \Xi_t(\de \eta) \He(\eta) = \int \Xi(\de \gamma) \He(\gamma_t),
\end{equation*}  
from which \eqref{convent}
 follows.
 \end{proof}

 \begin{proof}[Proof of Theorem \ref{main_theo}]
Observe that, by definition,  $\Ent(\mu\vert M_e)\leq \He (\mu)$ for any probability measure $\mu$ on $\bb R^d$.
Since the initial datum is entropically chaotic
\begin{equation}
\lim_{N\to\infty} \frac 1 N \Ent(P^N_0\vert \alpha^N)=\Ent(P_0\vert M_e)
\end{equation}
By Lemma \ref{lemma:convent} with $t=0$
\begin{equation*}
\liminf \frac 1 N\Ent\big(P^N_0\vert \alpha^N\big)\geq \int \Xi_0(\de \eta) \He(\eta)=\He(P_0),
\end{equation*}  
therefore  $\Ent(P_0\vert M_e)\geq \He(P_0)$, which implies $\Ent(P_0\vert M_e)=\He(P_0)$, or,  equivalently $P_0(\zeta_0)=e$.

Passing to the limit in \eqref{ineqN} we obtain
\begin{equation*}
\int \Theta(\de \gamma, \de Q)\big\{ \He(\gamma_T) + \E(Q \vert \Upsilon_{\#} Q^{\gamma\otimes\gamma}) - \He(\gamma_0)\big\}\leq 0
\end{equation*}
Moreover, since  $Q=Q^{\gamma\otimes\gamma}$ $\Theta$-a.s., we can add a term $\E(Q \vert Q^{\gamma\otimes\gamma})$ in the integral and obtain
\begin{equation}
  \label{finale}
\int \Theta(\de \gamma, \de Q)\big\{ \He(\gamma_T) +\E(Q\vert Q^{\gamma\otimes\gamma})+ \E(Q \vert \Upsilon_{\#} Q^{\gamma\otimes\gamma}) - \He(\gamma_0)\big\}\leq 0.
\end{equation}
Recalling \eqref{heh},
in view of  Proposition \ref{prop:QQpi} and the  uniqueness of the variational solution,
stated in Theorem \ref{theo:eu},
we conclude that $\Theta=\delta_{(P,Q^{P\otimes P})}$, with $P_t(\zeta_0)=e$, for every $t\in[0,T]$.

It remains to show the convergence of the entropy.
This follows from the fact that on the unique variational solution the energy is conserved,
therefore $\He(P_T)-\He(P_0) = \Ent(P_T|M_e) - \Ent(P_0|M_e)$.
Using \eqref{finale} and that $\Theta=\delta_{(P,Q^{P\otimes P})}$
$$\Ent(P_T|M_e) + E(Q^{P\otimes P} | \Upsilon_\# Q^{P\otimes P}) = \Ent(P_0|M_e).$$
Taking into account \eqref{convJ} and
\eqref{convent}, by the entropic chaoticity, 
passing to the limit in \eqref{ineqN}, 
we conclude that
$$\lim_{N\to +\infty} \frac 1N \Ent(P^N_T|\alpha^N) =
\Ent(P_T,M_e).$$
\end{proof}

\end{document}